\documentclass[11pt,a4paper]{article}

\usepackage[utf8]{inputenc}
\usepackage[T1]{fontenc}
\usepackage{lmodern}
\usepackage[margin=2.5cm]{geometry}
\usepackage{amsmath,amssymb,amsthm}
\usepackage{algorithm}
\usepackage{algpseudocode}
\usepackage{graphicx}
\usepackage{booktabs}
\usepackage{hyperref}
\usepackage{xcolor}
\usepackage{natbib}
\usepackage{microtype}
\usepackage{parskip}
\usepackage{comment}

\hypersetup{
  colorlinks=true,
  linkcolor=blue!60!black,
  citecolor=blue!60!black,
  urlcolor=blue!60!black
}

\newtheorem{proposition}{Proposition}

\title{%
  \textbf{IOAH3: Importance-Driven Adaptive Spatial Partitioning}\\
  \large via Graph-Cut Optimisation over H3 Hierarchical Grids
}

\author{
  Ehsaneddin Jalilian\\
  \small GeoSocial Artificial Intelligence, Interdisciplinary Transformation University Austria\\
  \small ehsaneddin.jalilian@it-u.at
}

\date{}
\begin{document}
\maketitle

\begin{abstract}
We present IOAH3 (Importance-Oriented Adaptive H3 partitioning), a
computational method for constructing data-driven spatial partitions of
geo-referenced observation domains. Standard approaches to spatial aggregation
adopt fixed areal units, such as administrative boundaries or uniform
hexagonal grids at a single resolution, without regard to the informational
content of the underlying observations in each region. This leads to the
well-known modifiable areal unit problem: statistical and inferential results
depend on the arbitrary choice of partition, and spatially concentrated
phenomena are averaged out in coarse cells that obscure fine-scale structure.
IOAH3 addresses this by constructing an adaptive partition in three stages:
multi-source feature extraction and importance scoring via principal component
analysis over road density, POI density, building density, and terrain
roughness signals, with population and flood-hazard data entering as auxiliary
inputs to cell filtering and spatial smoothness; spatial
cell selection via Markov Random Field graph-cut optimisation, which jointly
maximises per-cell importance while enforcing spatial contiguity; and
data-driven hierarchical refinement of high-importance regions to finer H3
resolution levels, with neighbour-propagated support to avoid isolated
fine-resolution islands. The resulting partitions serve as input to spatial inference pipelines and provide a principled resolution of the partition-sensitivity problem prior to any modelling step.
Code is available at \url{https://github.com/EhsaneddinJalilian/IoaH3}.
\end{abstract}

\newpage

\section{Introduction}
\label{sec:intro}

Spatial inference systems that operate on geo-referenced data must first
aggregate raw observations into discrete areal units before any modelling step
can be applied. The choice of those units is consequential: coarse cells
suppress within-cell variation, fine cells may be data-sparse, and neither
uniform grid nor administrative boundary systems are designed to align with the
spatial structure of the phenomenon under study. This sensitivity, the
modifiable areal unit problem (MAUP) \citep{openshaw1984modifiable}, is
well-documented in the spatial statistics literature, yet in practice most
GeoAI pipelines adopt a fixed resolution without examination.

The problem is not merely statistical. When the areal partition is too coarse
relative to the scale at which a spatial phenomenon varies, the aggregation
operator irreversibly destroys signal. No downstream model, however
sophisticated, can recover variation that was never present in its input. This
observation motivates a prior step: constructing the partition itself in a
principled, data-driven way that aligns cell boundaries with the spatial
structure of the observations.

IOAH3 is a method for this prior step. It takes multi-source raster and vector
observations over a geographic domain and produces an adaptive H3 hexagonal
partition whose resolution varies spatially according to the estimated
importance of each region. The method combines three independently motivated
components: unsupervised importance scoring via PCA over heterogeneous
feature signals, spatial cell selection via graph-cut optimisation over a
Markov Random Field (MRF) encoding both per-cell importance and spatial
contiguity, and data-driven hierarchical refinement of selected cells to
finer H3 resolution levels with neighbour-propagated support.

The remainder of this document is structured as follows. Section~\ref{sec:background}
provides background on H3 hierarchical grids and MRF graph-cut optimisation.
Section~\ref{sec:method} describes the three-stage IOAH3 pipeline in detail.
Section~\ref{sec:implementation} gives implementation and Computational Complexity. Section~\ref{sec:discussion} discusses the method and limitations.

\section{Background}
\label{sec:background}

\subsection{H3 Hierarchical Hexagonal Grids}

H3 \citep{brodsky2018h3} is a hierarchical discrete global grid system that
tessellates the Earth's surface using hexagonal cells at 16 resolution levels.
At resolution $r$, each cell covers approximately $A_r$~km$^2$, where $A_r$
decreases by a factor of roughly 7 with each increment in $r$. Key resolution
levels relevant to this work are given in Table~\ref{tab:h3res}.

\begin{table}[h]
\centering
\caption{H3 resolution levels used in IOAH3.}
\vspace{2mm}
\setlength{\tabcolsep}{10pt}
\renewcommand{\arraystretch}{1.3}
\label{tab:h3res}
\begin{tabular}{cccc}
\toprule
Resolution & Avg.\ cell area (km$^2$) & Avg.\ edge length (km) & Typical use \\
\midrule
7  & $\sim$5.16   & $\sim$1.22  & Base / coarse aggregation \\
9  & $\sim$0.105  & $\sim$0.174 & Intermediate refinement \\
10 & $\sim$0.015  & $\sim$0.066 & Fine-grained urban detail \\
\bottomrule
\end{tabular}
\end{table}

H3 cells have the property that each cell at resolution $r$ is exactly
contained within one cell at resolution $r-1$, forming a strict hierarchy.
This makes hierarchical refinement, replacing a coarse cell with its seven
child cells at the next resolution, geometrically exact and computationally
inexpensive.

\subsection{Markov Random Fields and Graph-Cut Optimisation}

A Markov Random Field (MRF) over a set of sites $\mathcal{V}$ with
neighbourhood structure $\mathcal{E}$ is a probability model of the form
\begin{equation}
  P(\mathbf{x}) \propto \exp\!\left(
    -\sum_{i \in \mathcal{V}} \psi_i(x_i)
    - \lambda \sum_{(i,j) \in \mathcal{E}} \phi_{ij}(x_i, x_j)
  \right),
\label{eq:mrf}
\end{equation}
where $\psi_i$ are unary potentials encoding per-site preferences and
$\phi_{ij}$ are pairwise potentials encoding smoothness over adjacent sites.
For binary labellings $x_i \in \{0, 1\}$ (included / excluded) and
submodular pairwise terms, the MAP estimate of \eqref{eq:mrf} can be computed
exactly in polynomial time via the max-flow / min-cut duality
\citep{boykov2001fast, kolmogorov2004energy}. The resulting cut minimises
a weighted combination of unary dissatisfaction and pairwise discontinuity,
producing a labelling that is simultaneously consistent with per-site
importance and spatially coherent.

\section{Method}
\label{sec:method}

\subsection{Stage 1: Feature Extraction and Importance Scoring}
\label{sec:stage1}

Let $\mathcal{H}_r$ denote the set of H3 cells covering the domain $\Omega$ at
base resolution $r$. For each cell $h \in \mathcal{H}_r$, IOAH3 extracts a
feature vector $\mathbf{f}(h) \in \mathbb{R}^d$ aggregating the following
information sources:

\begin{itemize}
  \item \textbf{Road density} $\rho_{\text{road}}(h)$: count of OSM road
    segments of class motorway, trunk, primary, secondary, tertiary,
    unclassified, or residential whose nodes fall within $h$, normalised by
    cell area.
  \item \textbf{POI density} $\rho_{\text{poi}}(h)$: count of OSM point-of-interest
    nodes (amenity or shop tags) within $h$, normalised by cell area.
  \item \textbf{Building density} $\rho_{\text{bldg}}(h)$: count of OSM building
    footprint nodes within $h$, normalised by cell area.
  \item \textbf{Elevation roughness} $\sigma_z(h)$: mean gradient magnitude
    within $h$, computed at the native DEM resolution (10\,m Lambert projection)
    as $\|\nabla z\| = \sqrt{(\partial z/\partial x)^2 + (\partial z/\partial y)^2}$. The value then is averaged over sampled pixels mapped to each
    H3 cell. This captures terrain ruggedness, a stronger predictor of
    infrastructural and hazard importance than mean elevation. 
    \vspace{2mm}
    
  \item \textbf{Population count} $p(h)$: total resident population within $h$,
    aggregated. Population is not included
    in the PCA feature set. It enters the pipeline as: (i)~a pre-filtering criterion
    (cells with zero population, no infrastructure, and negligible hazard are
    dropped before PCA); and (ii)~an input to the MRF pairwise smoothness weight
    alongside flood-hazard probability (see Section~\ref{sec:stage2}).
  \item \textbf{Flood-hazard probability} $\eta(h)$: mean flood-hazard probability
    within $h$, aggregated by mean from a flood-probability raster. Like
    population, hazard does not enter the PCA feature set but serves as a
    pre-filtering criterion and as a pairwise smoothness input in Stage~2.
\end{itemize}

The first four signals (road density, POI density, building density, elevation
roughness; $d = 4$) form the PCA feature matrix. They are standardised to zero
mean and unit variance, then reduced to a single importance score via principal
component analysis:
\begin{equation}
  s(h) = \mathrm{normalise}\!\left(\mathbf{w}^\top \tilde{\mathbf{f}}(h)\right)
         \in [0, 1],
\label{eq:pca_score}
\end{equation}
where $\mathbf{w}$ is the first principal component loading vector and
$\tilde{\mathbf{f}}(h)$ is the standardised feature vector. The normalisation
maps $s(h)$ to $[0,1]$ by min-max scaling across all cells. This score is
data-driven and parameter-free: no manual weighting of features is required.

\subsection{Stage 2: Graph-Cut Cell Selection}
\label{sec:stage2}

Given the importance scores $\{s(h)\}_{h \in \mathcal{H}_r}$, IOAH3
constructs an MRF over the H3 grid and solves for the binary labelling
$\ell: \mathcal{H}_r \to \{0, 1\}$ (included / excluded) that minimises

\begin{equation}
  E(\ell) = \sum_{h \in \mathcal{H}_r} \psi_h(\ell_h)
           + \lambda \sum_{(h,h') \in \mathcal{E}} \phi_{hh'}(\ell_h, \ell_{h'}),
\label{eq:energy}
\end{equation}

where the unary term $\psi_h(0) = 1 - s(h)$ (cost of including a low-importance
    cell) and $\psi_h(1) = 0$ (no cost for excluding any cell). The pairwise term $\phi_{hh'}(\ell_h, \ell_{h'}) = \mathbf{1}[\ell_h \neq \ell_{h'}]
    \cdot w_{hh'}$, where the smoothness weight is
    \begin{equation}
      w_{hh'} = e^{-\left(|p(h) - p(h')| + |\eta(h) - \eta(h')|\right)},
    \label{eq:smoothness}
    \end{equation}
    with $p(h)$ the raw population count and $\eta(h)$ the raw
    hazard value of cell $h$. Raw (un-normalised) values are used directly;
    the exponential suppresses the weight when adjacent cells differ strongly
    in population or hazard, regardless of absolute scale. $\mathcal{E}$ is the H3 k-ring-1 adjacency graph (each cell connected
    to its six hexagonal neighbours). $\lambda = 1.0$ controls the smoothness-importance trade-off.

The energy \eqref{eq:energy} is submodular in the pairwise terms (since
$w_{hh'} \geq 0$) and is therefore minimised exactly by max-flow / min-cut
using \citep{boykov2001fast}.

\vspace{4mm}

\begin{proposition}
The pairwise term \eqref{eq:smoothness} is submodular for all
$w_{hh'} \geq 0$, and the total energy \eqref{eq:energy} admits an exact
minimum-cut solution.
\end{proposition}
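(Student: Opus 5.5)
The plan is to verify the standard submodularity condition for binary pairwise terms and then invoke the graph-representability theorem of \citet{kolmogorov2004energy}, realising the energy \eqref{eq:energy} as an explicit $s$--$t$ cut. For a pairwise function $\phi$ on $\{0,1\}^2$, submodularity means
\begin{equation*}
  \phi(0,0) + \phi(1,1) \le \phi(0,1) + \phi(1,0).
\end{equation*}
The Potts-type term $\phi_{hh'}(\ell_h,\ell_{h'}) = \mathbf{1}[\ell_h \neq \ell_{h'}]\, w_{hh'}$ has $\phi_{hh'}(0,0)=\phi_{hh'}(1,1)=0$ and $\phi_{hh'}(0,1)=\phi_{hh'}(1,0)=w_{hh'}$. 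The condition therefore reduces to $0 \le 2w_{hh'}$, which holds whenever $w_{hh'} \ge 0$. For the specific weight \eqref{eq:smoothness} this is automatic: $w_{hh'}$ is an exponential, so $w_{hh'} \in (0,1]$ regardless of the raw magnitudes of $p$ and $\eta$. Since $\lambda = 1.0 > 0$, scaling by $\lambda$ preserves the inequality.

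Next I will handle the unary terms. They impose no condition, since any function of a single binary variable is trivially representable. Concretely, I will build a directed graph with source $s$, sink $t$ and one node per $h \in \mathcal{H}_r$. The labelling convention is that nodes on the sink side receive $\ell_h = 0$ and nodes on the source side receive $\ell_h = 1$; the orientation is immaterial up to relabelling. I will add a terminal edge carrying capacity $\psi_h(0) = 1 - s(h)$, oriented so that it is cut exactly when $\ell_h = 0$. Because $\psi_h(1) = 0$, the opposite terminal edge is omitted. The capacity is nonnegative because $s(h) \in [0,1]$ by the min-max normalisation in \eqref{eq:pca_score}. For each $(h,h') \in \mathcal{E}$ I will add a pair of opposite edges, each with capacity $\lambda w_{hh'}$.

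The check is then routine. For any labelling $\ell$, let $S = \{s\} \cup \{h : \ell_h = 1\}$. The capacity of the cut $(S, \bar S)$ equals $\sum_h \psi_h(\ell_h)$ plus $\lambda w_{hh'}$ for each neighbouring pair with $\ell_h \neq \ell_{h'}$; exactly one of the two opposite edges crosses from $S$ to $\bar S$ for each such pair. Hence the cut capacity equals $E(\ell)$. Conversely, every $s$--$t$ cut corresponds to exactly one labelling, so minimum cuts and minimisers of $E$ coincide. Max-flow/min-cut duality, computed for example by the algorithm of \citet{boykov2001fast}, then yields an exact global minimiser in polynomial time.

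The only point needing care is the construction and sign conventions: all capacities must be nonnegative, which relies on $s(h) \le 1$ and $w_{hh'} \ge 0$. I would first check these explicitly rather than quote the general theorem. Beyond that there is no real obstacle; the argument is the standard Kolmogorov--Zabih reduction specialised to Potts terms.
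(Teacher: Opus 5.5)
Your proposal is correct and follows the same route as the paper: you derive submodularity of the Potts term $\mathbf{1}[\ell_h \neq \ell_{h'}]\,w_{hh'}$ from $w_{hh'}\ge 0$, then apply the Kolmogorov--Zabih graph-representability result. The only difference is that you write out the $s$--$t$ construction that the paper leaves to the citation, which usefully makes explicit that the terminal capacities $1-s(h)$ are nonnegative by the min-max normalisation (you may want to rename the source node to avoid clashing with the score $s(h)$).
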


\begin{proof}

Submodularity of the term $\phi_{hh'}$ follows directly from
$w_{hh'} \geq 0$ and the standard Potts model construction. The result then follows from \citet{kolmogorov2004energy}, Theorem~1. 
\end{proof}

After optimisation, cells with $\ell_h = 0$ are marked \emph{included};
cells with $\ell_h = 1$ are marked \emph{excluded} and retained as
background context. An additional connectivity filter removes included cells
with fewer than one included neighbour, preventing isolated singleton
inclusions.

\subsection{Stage 3: Adaptive Hierarchical Refinement}
\label{sec:stage3}

Included cells are refined to finer H3 resolutions based on their importance
score. Let $q_{40}, q_{65}, q_{80}$ denote the 40th, 65th, and 80th
percentiles of $\{s(h) : \ell_h = 0\}$. The target resolution for cell $h$ is

\begin{equation}
  r^*(h) = \begin{cases}
    r_{\max} & \text{if } s(h) > q_{80}, \\
    9        & \text{if } q_{65} < s(h) \leq q_{80}, \\
    r_0      & \text{otherwise},
  \end{cases}
\label{eq:target_res}
\end{equation}

where $r_0 = 7$ is the base resolution and $r_{\max} = 10$ is the maximum
refinement resolution.

Critically, refinement is spatially propagated: for every cell $h$ with
$r^*(h) > r_0$, all H3 k-ring-1 neighbours of $h$ are also promoted to at
least $r^*(h)$. This ensures contiguous fine-resolution zones rather than
isolated fine cells surrounded by coarse cells, which would create
discontinuous partition boundaries.

Each cell $h$ with target resolution $r^* > r_0$ is replaced by its
$7^{r^*-r_0}$ H3 children at resolution $r^*$. Feature values for child
cells are sampled from fine-resolution rasters pre-loaded once before the
refinement loop.  Pseudocode for the full pipeline is given in Algorithm~\ref{alg:ioah3}.

\begin{algorithm}[h]
\caption{IOAH3: Importance-Driven Adaptive H3 Partitioning}
\label{alg:ioah3}
\begin{algorithmic}[1]
\Require Raster layers $\mathcal{R}$, OSM vector data $\mathcal{V}$,
         base resolution $r_0$, max resolution $r_{\max}$
\Ensure  Adaptive partition $\mathcal{P}$ with per-cell resolution and features

\State Extract per-cell features $\mathbf{f}(h)$ for all $h \in \mathcal{H}_{r_0}$
  from $\mathcal{R}$ and $\mathcal{V}$
\State Compute importance scores $s(h) \leftarrow \mathrm{PCA}_1(\mathbf{f}(h))$
  via \eqref{eq:pca_score}
\State Construct MRF graph $G = (\mathcal{H}_{r_0}, \mathcal{E})$ with
  unary and pairwise terms per \eqref{eq:energy}--\eqref{eq:smoothness}
\State Solve $\ell^* \leftarrow \arg\min_\ell E(\ell)$ via max-flow / min-cut
\State \textbf{Connectivity filter:} for each $h$ with $\ell^*_h = 0$,
  set $\ell^*_h \leftarrow 1$ if $|\{h' \in \mathrm{kring1}(h) : \ell^*_{h'} = 0\}| < 1$
\State Compute quantiles $q_{40}, q_{65}, q_{80}$ over $\{s(h) : \ell^*_h = 0\}$
\State Compute target resolutions $r^*(h)$ per \eqref{eq:target_res}
\State Propagate refinement to neighbours:
  \For{$h$ with $r^*(h) > r_0$}
    \For{$h'$ in k-ring-1 neighbours of $h$}
      \State $r^*(h') \leftarrow \max(r^*(h'), r^*(h))$
    \EndFor
  \EndFor
\State \textbf{Refine:}
  \For{$h$ with $r^*(h) > r_0$}
    \State Replace $h$ with children $\{c : c \in \mathrm{H3children}(h, r^*(h))\}$
    \State Sample feature values for each child from fine rasters
  \EndFor
\State \Return partition $\mathcal{P}$ with per-cell resolution, features,
  and inclusion label
\end{algorithmic}
\end{algorithm}

\section{Implementation and Computational Complexity}
\label{sec:implementation}

Let $n = |\mathcal{H}_{r_0}|$ be the number of base-resolution cells. Stage~1
runs in $O(nd)$ time for feature extraction and $O(nd)$ for PCA scoring.
Stage~2 constructs an MRF with $n$ nodes and at most $6n$ edges (H3 hexagonal
adjacency), and solves it in $O(n^2 \sqrt{n})$ worst-case time via the
Boykov-Kolmogorov max-flow algorithm, though empirical runtimes are
substantially faster on geographic grids. Stage~3 runs in $O(n \cdot
7^{r_{\max} - r_0})$ in the worst case (all cells refined to maximum
resolution), but in practice only a small fraction of cells are refined to
$r_{\max}$.

\section{Discussion and Limitations}
\label{sec:discussion}

IOAH3 addresses the MAUP sensitivity problem at the partition construction
stage, prior to any modelling step. The graph-cut formulation is principled in
that it admits an exact solution and makes the importance-smoothness trade-off
explicit through the $\lambda$ parameter. The hierarchical H3 structure
guarantees that refinement is geometrically consistent and that child-cell
feature values can be sampled directly from the same raster sources.

Several limitations are worth acknowledging. First, the PCA importance score
is a linear aggregation of the input features; non-linear importance structure
(for example, interaction between road density and hazard level) is not
captured. A learned importance function, trained on downstream task
performance, would be more expressive but requires labelled data. Second, the
MRF smoothness term is defined over population and hazard signals only; a more
general formulation would include all feature signals in the pairwise weights.
Third, the quantile-based refinement thresholds in \eqref{eq:target_res} are
fixed at the 40th, 65th, and 80th percentiles; these were chosen empirically
and may require adjustment for domains with different importance distributions.

\section{Conclusion}
\label{sec:conclusion}

We have presented IOAH3, a three-stage method for constructing adaptive spatial
partitions of geo-referenced observation domains over H3 hierarchical grids.
The method combines PCA-based importance scoring, MRF graph-cut optimisation
for spatially coherent cell selection, and data-driven hierarchical refinement
with neighbour propagation. The resulting multi-resolution partitions
concentrate fine-grained cells in regions of high informational importance
while maintaining a coarser but complete background partition. The
method is fully automated, parameter-light, and produces citable, reproducible
output suitable as input to downstream spatial inference pipelines.

\bibliographystyle{plainnat}
\bibliography{ioah3_refs}

\end{document}